\newcommand{\mc}{\mathcal}
\newcommand{\cp}{\times}
\newcommand{\Norm}[1]{\left\lvert\left\lvert #1 \right\rvert\right\rvert}
\newcommand{\bol}{\boldsymbol}
\newcommand{\abs}[1]{\left\lvert{#1}\right\rvert}
\newcommand{\lr}[1]{\left({#1}\right)}
\newcommand{\lrs}[1]{\left[{#1}\right]}
\newcommand{\lrc}[1]{\left\{{#1}\right\}}
\newcommand{\mf}{\mathfrak}
\newcommand{\p}{\partial}
\newcommand{\ti}[1]{\textit{#1}}
\newcommand{\tb}[1]{\textbf{#1}}
\newtheorem{remark}{\textit{Remark}}
\newtheorem{theorem}{\textit{Theorem}}
\begin{document}

\title{
Hamiltonian Structure and Nonlinear Stability of Steady   
Solutions \\of the Generalized Hasegawa-Mima Equation \\for Drift Wave   
Turbulence in Curved Magnetic Fields}
\author[1]{Naoki Sato} 
\author[2]{Michio Yamada}
\affil[1]{Graduate School of Frontier Sciences, \protect\\ The University of Tokyo, Kashiwa, Chiba 277-8561, Japan \protect\\ Email: sato\_naoki@edu.k.u-tokyo.ac.jp}
\affil[2]{Research Institute for Mathematical Sciences, \protect\\ Kyoto University, Kyoto 606-8502, Japan
\protect \\ Email: yamada@kurims.kyoto-u.ac.jp}
\date{\today}
\setcounter{Maxaffil}{0}
\renewcommand\Affilfont{\itshape\small}

    \maketitle
    \begin{abstract}
    The Generalized Hasegawa-Mima (GHM) equation, which generalizes the standard Hasegawa-Mima (HM) equation, is a nonlinear equation 
    describing the evolution of drift wave turbulence in curved magnetic fields.  
    The GHM equation can be obtained from a drift wave turbulence ordering 
    that does not involve ordering conditions on spatial derivatives of the magnetic field or the plasma density, and it is therefore appropriate to describe the evolution of electrostatic turbulence in strongly inhomogeneous magnetized plasmas. 
    In this work, 
   we discuss the noncanonical Hamiltonian structure of the GHM equation,  
   and obtain conditions for the nonlinear stability of steady solutions through the energy-Casimir stability criterion. 
   These results are then applied to describe drift waves and infer the existence of stable toroidal zonal flows with radial shear in dipole magnetic fields.
    \end{abstract}

\section{Introduction}
The generalized Hasegawa-Mima (GHM) equation \cite{GHM,GHM2}
\begin{equation}
\frac{\p}{\p t}\lrs{\lambda A_e\chi-\sigma\nabla\cdot\lr{A_e\frac{\nabla_{\perp}\chi}{B^2}}}=\nabla\cdot\lrs{A_e\lr{\sigma\frac{\bol{B}\cdot\nabla\cp\bol{v}_{\bol{E}}^{\chi}}{B^2}-1}\bol{v}_{\bol{E}}^{\chi}},\label{GHMgc5}
\end{equation}
describes the nonlinear evolution of the field $\chi\lr{\bol{x},t}=\varphi\lr{\bol{x},t}+\frac{\sigma}{2}\bol{v}_{\bol{E}}^2\lr{\bol{x},t}$,
physically representing the energy of a charged particle, caused by drift wave turbulence in an ion-electron plasma within a static magnetic field $\bol{B}\lr{\bol{x}}\neq\bol{0}$ of arbitrary geometry.  
Here $\bol{x}$ are Cartesian coordinates in a region $\Omega\subseteq\mathbb{R}^3$, $t$ is the time variable, $A_e\lr{\bol{x}}$ 
the leading order electron spatial density, 
$\sigma=m/Ze$ a physical constant with $m$ and $Ze$ ion mass and charge, $Z\in\mathbb{N}$, $\nabla_{\perp}=-B^{-2}\bol{B}\cp\lr{\bol{B}\cp\nabla}$, 
and the velocity fields $\bol{v}_{\bol{E}}^{\chi}\lr{\bol{x},t}$ and $\bol{v}_{\bol{E}}\lr{\bol{x},t}$ are respectively defined as
\begin{equation}
\bol{v}_{\bol{E}}^{\chi}=\frac{\bol{B}\cp\nabla\chi}{B^2},~~~~\bol{v}_{\bol{E}}=\frac{\bol{B}\cp\nabla\varphi}{B^2},
\end{equation}
with $\bol{E}\lr{\bol{x},t}=-\nabla\varphi\lr{\bol{x},t}$ the electric field associated with the electrostatic potential $\varphi\lr{\bol{x},t}$. The plasma is quasineutral, implying that $n_e=A_ee^{\lambda\varphi}=Z n_i$ with $n_e$ and $n_i$ the electron and ion densities, and  $\lambda=e/k_BT_e$ a physical constant 
where $k_BT_e$ denotes the temperature of the thermalized electron component. 

\begin{table}
\begin{center}
\begin{tabular}{c c c c c c} 
 \hline
 \hline
 Invariant & Expression & Field Conditions & Boundary Conditions\\  
 \hline 
 Mass $M_{\Omega}$ & $\int_{\Omega}A_e\lr{1+\lambda\chi}\,d\bol{x}$ & none &  $A_e{\bol{V}}_{dw}\cdot\bol{n}=0$ &\\
 Energy $H_{\Omega}$ & $\frac{1}{2}\int_{\Omega}A_e\lr{\lambda\chi^2+\sigma\frac{\abs{\nabla_{\perp}\chi}^2}{B^2}}\,d\bol{x}$ & none & $A_e\chi{\bol{V}}_{dw}\cdot\bol{n}=0$&\\
Enstrophy $W_{\Omega}$ & $\int_{\Omega}A_ew\lr{\lambda\chi-\frac{\sigma}{A_e}\nabla\cdot\lr{A_e\frac{\nabla_{\perp}\chi}{B^2}}}\,d\bol{x}$ & $\nabla\cp\lr{A_e\frac{\bol{B}}{B^2}}=\bol{0}$ & $wA_e\bol{v}_{\bol{E}}^{\chi}\cdot\bol{n}=0$ &\\
 \hline
 \hline
\end{tabular}
\caption{\label{tab3} Invariants of the GHM equation \eqref{GHMgc5}.}
\end{center}
\end{table}

Under suitable boundary conditions, the GHM equation \eqref{GHMgc5} preserves total ion mass $M_{\Omega}$ and energy $H_{\Omega}$, 
\begin{equation}
M_{\Omega}=\frac{m}{Z}\int_{\Omega}A_e\lr{1+\lambda\chi}d\bol{x},~~~~
H_{\Omega}=\frac{1}{2}\int_{\Omega}A_e\lr{\lambda\chi^2+\sigma\frac{\abs{\nabla_{\perp}\chi}^2}{B^2}}\,d\bol{x}.\label{MH} 
\end{equation}
In addition, a third invariant (generalized enstrophy) $W_{\Omega}$ arises when the magnetic field $\bol{B}$ and the electron spatial density $A_e$ satisfy the integrability condition $\nabla\cp\lr{A_e\bol{B}/B^2}=\bol{0}$,
\begin{equation}
W_{\Omega}=\int_{\Omega}A_ew\lr{\lambda\chi-\sigma\frac{\omega}{A_e}}\,d\bol{x},\label{W}
\end{equation}
where $w\lr{\lambda\chi-\sigma\frac{\omega}{A_e}}$ 
is any function of $\lambda\chi-\sigma\frac{\omega}{A_e}$ with $\omega=\nabla\cdot\lr{A_e\frac{\nabla_{\perp}{\chi}}{B^2}}$.
The invariants of the GHM equation are summarized in table 1. In this table $\bol{n}$ denotes the unit outward normal to the bounding surface $\p\Omega$, while the vector field (total drift velocity)  ${\bol{V}}_{dw}$ is defined as
\begin{equation}
{\bol{V}}_{dw}={\lr{1-\sigma\frac{\bol{B}\cdot\nabla\cp\bol{v}_{\bol{E}}^{\chi}}{B^2}}\bol{v}_{\bol{E}}^{\chi}}-\sigma{\frac{\nabla_{\perp}\chi_t}{B^2}},
\end{equation}
so that the GHM equation \eqref{GHMgc5} can be equivalently written as
\begin{equation}
\lambda A_e\frac{\p\chi}{\p t}=-\nabla\cdot\lr{A_e{\bol{V}}_{dw}}.
\end{equation}

The GHM equation \eqref{GHMgc5} can be derived 
from guiding center theory \cite{GHM2}    
by expanding the Euler-Lagrange equations arising from the Northrop guiding center Lagrangian \cite{Cary, Northrop} under an appropriate guiding center drift wave turbulence ordering, or from a two-fluid model of the ion-electron plasma under an equivalent two-fluid drift wave turbulence ordering \cite{GHM,Haz98}. 
Table 1 summarizes the guiding
center ordering required for the conservation of the first adiabatic invariant $\mu$ in guiding center theory, 
table 2 gives the drift wave turbulence ordering required to obtain the GHM equation \eqref{GHMgc5} from guiding center theory, while table 3 shows the drift wave turbulence ordering leading to \eqref{GHMgc5} from two-fluid theory.  
Here, $\epsilon >0$ denotes a small ordering parameter, $\omega_c=ZeB/m$ the ion cyclotron frequency, $\tau$, $\tau_d$, $\tau_b$ a reference time scale, the drift wave turbulence time scale, and the time scale of  bounce motion, $\rho$ the gyroradius, $L$ a characteristic scale length for the system, $\bol{E}_{\perp}$ the component of $\bol{E}$ perpendicular to $\bol{B}$, $\bol{E}_{\parallel}=\bol{E}-\bol{E}_{\perp}$, $\bol{v}$ the velocity of a charged particle, $\bol{v}_{\nabla}$, $\bol{v}_{\kappa}$, $\bol{v}_{\rm pol}$ the $\nabla B$, curvature, and polarization drifts, $u$ the guiding center velocity along $\bol{B}$, $k_BT_c$ the average energy of ion cyclotron motion, $E_{\parallel}'$ the component of the effective guiding center electric field $\bol{E}'$ along the effective guiding center magnetic field $\bol{B}'$ 
(see \cite{GHM2,Cary} for definitions),  $v_{\parallel}$ the ion fluid velocity parallel to $\bol{B}$, and $P$ the ion fluid pressure. 
In essence, the drift wave turbulence orderings of tables 2 and 3 describe an ion-electron plasma with  cold ions and a hot electron component where the dynamics along the magnetic field is slow compared to perpendicular $\bol{E}\cp\bol{B}$ drift motion. 

When $\bol{B}=B_0\nabla z$, $\log A_{e}=\log A_{e0}+\beta x$, $B_0,A_{e0},\beta\in\mathbb{R}$, $\beta L\sim\epsilon$, the GHM equation \eqref{GHMgc5} reduces to the standard Hasegawa-Mima (HM) equation \cite{HM,HM2}  
\begin{equation}
\frac{\p}{\p t}\lr{\lambda\varphi-\frac{\sigma}{B_0^2}\Delta_{\lr{x,y}}\varphi}=\frac{\sigma}{B_0^3}\lrs{\varphi,\Delta_{\lr{x,y}}\varphi}_{\lr{x,y}}+\frac{\beta}{B_0}\varphi_y.\label{HM}
\end{equation}
In this notation, $\lrs{f,g}_{\lr{x,y}}=f_xg_y-f_yg_x$,  $\Delta_{\lr{x,y}}=\p_{x}^2+\p_{y}^2$, and lower indexes denote partial derivatives, for example $f_x=\p f/\p x$.  
The HM equation represents 
a simple but effective model of 2-dimensional turbulence in magnetized plasmas and fluids \cite{Horton,Batchelor}, 
which exhibits self-organizing behavior (zonal flows)  \cite{HM5,Hasegawa85,Horton2,Fujisawa,Diamond} associated with 
inverse energy cascades \cite{Kraichnan2,Kraichnan,Rivera,Xiao}.
The nonlinearity of the equation is driven by the polarization drift, 
and in the presence of a density gradient the evolution of the electrostatic potential can be understood in terms of the nonlinear interaction of drift waves.
The geophysical fluid dynamics equivalent of the HM equation, the Charney equation \cite{Charney,Charney3} plays a central role in the understanding of atmospheric dynamics on the surface of rotating planets, 
with the Rossby wave replacing the drift wave of the HM system. 

\begin{table}
\begin{center}
\begin{tabular}{c c c c c c} 
 \hline
 \hline
 Order & Dimensionless & Fields & Distances & Rates & Velocities\\  
 \hline 
 $\epsilon^{-1}$ & & $\bol{B},\bol{E}_{\perp}$& & $\omega_{c}$ &\\ 
 $1$ &  & $\bol{E}_{\parallel}$ & $L$ & $\bol{v}/L,\bol{v}_{\bol{E}}/L,\tau^{-1}$ &$\bol{v},\bol{v}_{\bol{E}}$\\
 $\epsilon$ & $\rho/L$, $\lr{\omega_{c}\tau}^{-1}$ &  & $\rho$ & $\bol{v}_{\nabla}/L,\bol{v}_{\kappa}/L,\bol{v}_{\rm pol}/L$  & $\bol{v}_{\nabla},\bol{v}_{\kappa},\bol{v}_{\rm pol}$\\
 \hline
 \hline
\end{tabular}
\caption{\label{tab2} Guiding center ordering required for the existence of the first adiabatic invariant $\mu$ \cite{Cary}.}
\end{center}

\begin{center}
\begin{tabular}{c c c c c c} 
 \hline
 \hline
 Order & Dimensionless & Fields & Distances & Rates & Velocities\\  
 \hline 
 $\epsilon^{-1}$ & & $\bol{B},\bol{E}_{\perp}$& & $\omega_c$ &\\ 
 $1$ &  & $A_e$ & $L$ & $\tau^{-1}_d,\bol{v}_{\bol{E}}/L$ & $\bol{v}_{\bol{E}}$\\
 $\epsilon$ &  $\lambda\varphi,\rho/L,\lr{\omega_c\tau_d}^{-1}, k_BT_c/\frac{m}{2}\bol{v}_{\bol{E}}^2$ &  & $\rho$ & $\bol{v}_{\rm pol}/L$ & $\bol{v}_{\rm pol}$\\
 $\epsilon^2$ &$\frac{m}{2}\bol{v}_{\bol{E}}^2/k_BT_e,\tau_d/\tau_b$ & $E_{\parallel}'$ &  & $\bol{v}_{\nabla}/L,u/L,\tau_b^{-1}$ & $\bol{v}_{\nabla},u$ \\
 $\epsilon^5$ & & & & $\bol{v}_{\kappa}/L$ & $\bol{v}_{\kappa}$ \\
 \hline
 \hline
\end{tabular}
\caption{\label{tab3} Drift wave turbulence ordering for the derivation of the GHM equation in guiding-center theory \cite{GHM2}.}
\end{center}
\begin{center}
\begin{tabular}{c c c c c c} 
 \hline
 \hline
 Order & Dimensionless & Fields & Distances & Rates & Velocities\\  
 \hline 
 $1$ & & $\bol{B},A_e$& $L$ & $\omega_c$ &\\ 
 $\epsilon$ & $\lambda\varphi, 
 \omega_c^{-1}\p_t$ & $\bol{E}_{\perp}$ & & $\tau^{-1}_d,\bol{v}_{\bol{E}}/L$ & $\bol{v}_{\bol{E}}$\\
 $\epsilon^2$ &  $\tau_d/\tau_b$ &  &  & $\bol{v}_{\rm pol}/L$ & $\bol{v}_{\rm pol}$\\
 $\epsilon^3$ & & $E_{\parallel},P$ &  & $\tau_b^{-1},v_{\parallel}/L$ & $v_{\parallel}$ \\
  \hline
 \hline
\end{tabular}
\caption{\label{tab4} Drift wave turbulence ordering for the derivation of the GHM equation from a two-fluid model \cite{GHM,GHM2}. }
\end{center}
\begin{center}
\begin{tabular}{c c c c c c} 
 \hline
 \hline
 Order & Dimensionless & Fields & Distances & Rates & Velocities\\  
 \hline 
 $1$ & & $\bol{B},A_e$& $L$ & $\omega_c$ &\\ 
 $\epsilon$ & $\lambda\varphi,\omega_c^{-1}\p_t,L\nabla\log B,L\nabla \log A_e$ & $\bol{E}_{\perp}$ & & $\tau^{-1}_d,\bol{v}_{\bol{E}}/L$ & $\bol{v}_{\bol{E}}$\\
 $\epsilon^2$ & $\tau_d/\tau_b$  &  &  & $\bol{v}_{\rm pol}/L$ & $\bol{v}_{\rm pol}$\\
 $\epsilon^3$ & & $E_{\parallel},P$ &  & $\tau_b^{-1},v_{\parallel}/L$ & $v_{\parallel}$ \\
  \hline
 \hline
\end{tabular}
\caption{\label{tab5} Drift wave turbulence ordering for the derivation of the HM equation from a two-fluid model \cite{HM}.} 
\end{center}
\end{table}

The GHM equation \eqref{GHMgc5} inherits these features while extending 
the range of applicability of 
drift wave theory to inhomogeneous plasmas and general magnetic fields within a single partial differential equation. 
Therefore, it is expected to be useful to characterize 
drift wave turbulence in systems with strong density gradients and 
enhanced field curvature and inhomogeneity. 
In fact, drift wave type turbulence as well as so called entropy modes have been reported in experiments involving plasma confinement in dipole magnetic fields \cite{Boxer,Kenmochi,Garnier}. These systems typically exhibit a hot electron component, while the ion plasma is cold,  
suggesting the onset of drift wave dynamics. 
One of the possible applications of the GHM equation 
is therefore the study of electrostatic turbulence in planetary magnetospheres. 
Due to the guiding center origin of the GHM equation, 
the obtained results could be compared with the 
parent model represented by nonlinear gyrokinetic theory
\cite{Hahm96a,Hahm96b,Hahm09}.

Our purpose in this paper is to complement the theory 
pertaining to the GHM equation developed in \cite{GHM, GHM2}. 
In particular, we wish to elucidate the Hamiltonian structure \cite{PJM} of the GHM equation, 
and use it to infer the stability properties of steady solutions. 
Furthermore, we want to determine whether zonal flows can form in dipole magnetic fields, 
and characterize drift waves in dipole geometry.

The present paper is organized as follows.
In section 2 we examine the algebraic structure of the GHM equation, and obtain sufficient conditions on magnetic field $\bol{B}$ and electron spatial density $A_e$ under which the GHM equation defines a noncanonical Hamiltonian system. 
These results are consistent with the Hamiltonian structure of the standard HM equation 
\cite{Weinstein,Tassi,Hazeltine,Hazeltine2}.  
In section 3 we prove a theorem concerning the nonlinear stability of steady solutions of the GHM equation by applying the energy-Casimir method \cite{Holm,Tronci,Rein}. This result generalizes Arnold's stability criterion for a 2-dimensional fluid flow \cite{Arnold}. 
In section 4 we show that stable toroidal zonal flows can form in dipole magnetic fields, 
and characterize the angular frequency of drift waves in dipole geometry.
Concluding remarks are given in section 5.

\section{Algebraic structure of the GHM equation}

In this section, we discuss the algebraic structure of the GHM equation \eqref{GHMgc5}. In particular, we are concerned with the conditions under which equation \eqref{GHMgc5} can be written in the form
\begin{equation}
\frac{\p\eta}{\p t}=\lrc{\eta,H_{\Omega}},\label{Ham}
\end{equation}
where $\eta=\lambda A_e\chi-\sigma\omega$ and 
$\lrc{\cdot,\cdot}$ denotes a Poisson bracket \cite{PJM} acting on functionals of $\eta$.  

Conservation of energy $H_{\Omega}$ suggests that the GHM equation has an antisymmetric bracket structure. 
This antisymmetric bracket structure is sufficient to 
carry out the nonlinear stability analysis of the next section. 
However, we also expect  
the validity of the Jacobi identity (which would 
make the bracket also the Poisson bracket of a noncanonical Hamiltonian system) to depend on the geometry of the magnetic field. 
This expectation is made in analogy with the behavior of $\bol{E}\times\bol{B}$ drift dynamics, i.e. the dynamical system defined by
\begin{equation}
\dot{\bol{X}}=\bol{v}_{\bol{E}}=\frac{\bol{B}\cp\nabla\varphi}{B^2}.
\end{equation}
This dynamical system defines an Hamiltonian system only when the magnetic field has a vanishing helicity density, $\bol{B}\cdot\nabla\cp\bol{B}=0$, although the potential energy $Ze\varphi$, which represents the energy of the system, is a constant of motion for any $\bol{B}$ because  $\dot{\bol{X}}\cdot\nabla\varphi=0$ \cite{SatoPRE,Chandre}. 
Since the building block of drift wave turbulence is $\bol{E}\cp\bol{B}$ dynamics, 
we thus expect an integrability condition of the type $\bol{B}\cdot\nabla\cp\bol{B}=0$ 
to be required for the GHM equation to possess an Hamiltonian structure (recall that when $\bol{B}\cdot\nabla\cp\bol{B}=0$ there exist local functions
$\lambda,C$ such that $\bol{B}=\lambda\nabla C$ (Frobenius theorem \cite{Frankel})  implying that $C$ is a first integral, $\dot{\bol{X}}\cdot\nabla C=0$).   

First, define the second order linear partial differential operator $\mc{D}$ according to
\begin{equation}
\mc{D}\chi=\eta=\lambda A_e\chi-\sigma\omega=\lambda A_e\chi-\sigma\nabla\cdot\lr{A_e\frac{\nabla_{\perp}\chi}{B^2}}.
\end{equation}
In the following, we shall 
assume the inverse operator $\mc{D}^{-1}$ mapping $\eta$ to $\chi\in\mf{X}$ to be well defined by appropriate choice of the space of solutions $\mf{X}$.  
Next, consider the bracket
\begin{equation}
\lrc{F,G}=\int_{\Omega}A_e\lr{1-\sigma\frac{\bol{B}\cdot\nabla\cp\bol{v}_{\bol{E}}^{\chi}}{B^2}}\nabla\lr{\frac{\delta F}{\delta\eta}}\cdot\frac{\bol{B}}{B^2}\cp\nabla\lr{\frac{\delta G}{\delta\eta}}\,d\bol{x},\label{PB}
\end{equation}
acting on functionals $F,G\in\mf{X}^{\ast}$, where $\mf{X}^{\ast}$ denotes the dual space of $\mf{X}$. 
Assuming variations $\delta\chi$ and the electron spatial density $A_e$ to vanish on the boundary, and noting that
\begin{equation}
\frac{\delta H_{\Omega}}{\delta\eta}=\int_{\Omega}\frac{\delta H_{\Omega}}{\delta\chi\lr{\bol{x}',t}}\frac{\delta}{\delta\eta\lr{\bol{x},t}}\mc{D}^{-1}\eta\lr{\bol{x}',t}\,d\bol{x}'=\mc{D}^{-1}\frac{\delta H_{\Omega}}{\delta\chi}=\chi,
\end{equation}
where $H_{\Omega}$ is the energy given in \eqref{HM}, 
one can verify that the GHM equation \eqref{GHMgc5} can be written in the form \eqref{Ham} through the bracket \eqref{PB}. 

It is also clear that the bracket \eqref{PB} possesses an antisymmetric bracket structure. 
Indeed, the bracket \eqref{PB} is bilinear and alternating (and thus  antisymmetric), and it also satisfies the Leibniz rule. In formulae, 
\begin{subequations}
\begin{align}
&\lrc{aF+bG,H}=a\lrc{F,H}+b\lrc{G,H},~~~~\lrc{H,aF+bG}=a\lrc{H,F}+b\lrc{H,G},\\
&\lrc{F,F}=0,\\
&\lrc{F,G}=-\lrc{G,F},\\
&\lrc{FG,H}=F\lrc{G,H}+\lrc{F,H}G,
\end{align}\label{PBA}
\end{subequations}
for all $a,b\in\mathbb{R}$ and $F,G,H\in\mf{X}^{\ast}$.
For $\lrc{\cdot,\cdot}$ to qualify as a Poisson bracket it therefore remains to verify the Jacobi identity,
\begin{equation}
\lrc{F,\lrc{G,H}}+\lrc{G,\lrc{H,F}}+\lrc{H,\lrc{F,G}}=0.
\end{equation}
To this end, it is useful to introduce the following notation for the Jacobi identity,
\begin{equation}
\lrc{F,\lrc{G,H}}+\circlearrowright=0,
\end{equation}
where $\circlearrowright$ represents summation of even permutations.
Furthermore, we shall denote functional derivatives as $F_{\eta}=\delta F/\delta\eta$, and define the quantity
\begin{equation}
\bol{\beta}=A_e\lr{1-\sigma\frac{\bol{B}\cdot\nabla\cp\bol{v}_{\bol{E}}^{\chi}}{B^2}}\frac{\bol{B}}{B^2}.
\end{equation}
Notice that $\bol{\beta}=\bol{\beta}\lrs{\eta}$ is a functional of $\eta$. 
Omitting the range of integration, the Jacobi identity for the bracket \eqref{PB} now reads
\begin{equation}
\lrc{F,\lrc{G,H}}+\circlearrowright=\int\nabla F_{\eta}\cdot\bol{\beta}\cp\nabla \frac{\delta}{\delta\eta}\lr{\int\nabla G_{\eta}\cdot\bol{\beta}\cp\nabla H_{\eta}\,d\bol{x}}\,d\bol{x}+\circlearrowright.
\end{equation}
Terms involving second order functional derivatives of
$F$, $G$, and $H$ vanish (on this point see e.g. \cite{Olver}).
For example, the term
\begin{equation}
\int\nabla \lr{G_{\eta\eta}\delta\eta}\cdot\bol{\beta}\times\nabla H_{\eta}\,d\bol{x}=-\int \delta\eta G_{\eta\eta}\nabla H_{\eta}\cdot\nabla\cp\bol{\beta}\,d\bol{x}, 
\end{equation}
gives rise to the following contribution to the Jacobi identity,
\begin{equation}
-\int \nabla F_{\eta}\cdot\bol{\beta}\cp\nabla\lr{G_{\eta\eta}\nabla H_{\eta}\cdot\nabla\cp\bol{\beta}}\,d\bol{x}=-\int G_{\eta\eta}\nabla H_{\eta}\cdot\nabla\cp\bol{\beta}\nabla F_{\eta}\cdot\nabla\cp\bol{\beta}\,d\bol{x},\label{JI1}
\end{equation}
where we used the hypothesis that the electron density vanishes on the boundary, $A_e=0$ on $\p\Omega$, so that
$\bol{\beta}=\bol{0}$ on $\p\Omega$ and boundary terms evaluate to zero. 
On the other hand, the following term occurring in the  permutation $\lrc{H,\lrc{F,G}}$,
\begin{equation}
\int\nabla F_{\eta}\cdot\bol{\beta}\times\nabla \lr{G_{\eta\eta}\delta\eta}\,d\bol{x}=\int \delta\eta G_{\eta\eta}\nabla F_{\eta}\cdot\nabla\cp\bol{\beta}\,d\bol{x}, 
\end{equation}
contributes to the Jacobi identity with
\begin{equation}
\int\nabla H_{\eta}\cdot\bol{\beta}\cp\nabla\lr{G_{\eta\eta}\nabla F_{\eta}\cdot\nabla\cp\bol{\beta}}\,d\bol{x}=\int G_{\eta\eta}\nabla H_{\eta}\cdot\nabla\cp\bol{\beta}\nabla F_{\eta}\cdot\nabla\cp\bol{\beta}\,d\bol{x},
\end{equation}
which cancels with \eqref{JI1}.
It follows that the only surviving terms in the Jacobi identity are those involving functional derivatives of $\bol{\beta}$.
In particular, we must evaluate the integral
\begin{equation}
\int \nabla G_{\eta}\cdot\delta\bol{\beta}\cp\nabla H_{\eta}\,d\bol{x}.
\end{equation}
To this end, it is useful to define the quantities
\begin{equation}
\zeta={1-\sigma\frac{\bol{B}\cdot\nabla\cp\bol{v}_{\bol{E}}^{\chi}}{B^2}},~~~~
\theta=-\sigma\nabla G_{\eta}\cdot \frac{\bol{B}}{B^2}\cp\nabla H_{\eta},
\end{equation}
so that
\begin{equation}
\begin{split}
\int \nabla G_{\eta}\cdot\delta\bol{\beta}\cp\nabla H_{\eta}\,d\bol{x}=&\int A_e\theta{\frac{\bol{B}}{B^2}\cdot\nabla\cp\lr{\frac{\bol{B}\cp\nabla\delta\chi}{B^2}}}\,d\bol{x}\\
=&\int \frac{\bol{B}\cp\nabla\delta\chi}{B^2}\cdot\nabla\cp\lr{A_e\theta\frac{\bol{B}}{B^2}}\,d\bol{x}\\=&\int \nabla \mc{D}^{-1}\delta\eta\cdot\nabla\cp\lr{A_e\theta\frac{\bol{B}}{B^2}}\cp\frac{\bol{B}}{{B^2}}\,d\bol{x}\\=&
\int \delta\eta\mc{D}^{-1}\nabla\cdot\lrc{\frac{\bol{B}}{B^2}\cp\lrs{\nabla\cp\lr{
\theta A_e\frac{\bol{B}}{B^2}}
}}\,d\bol{x}
\end{split}
\end{equation}
and the Jacobi identity can be written as
\begin{equation}
\begin{split}
\lrc{F,\lrc{G,H}}+\circlearrowright=&\int A_e\zeta\nabla F_{\eta}\cdot\frac{\bol{B}}{B^2}\cp\nabla\mc{D}^{-1}\nabla\cdot\lrc{\frac{\bol{B}}{B^2}\cp\lrs{\nabla\cp\lr{
\theta A_e\frac{\bol{B}}{B^2}}
}}\,d\bol{x}+\circlearrowright\\
=&\int
\mc{D}^{-1}\nabla\cdot\lrc{\frac{\bol{B}}{B^2}\cp\lrs{\nabla\cp\lr{
\theta A_e\frac{\bol{B}}{B^2}}
}}
\nabla F_{\eta}\cdot\nabla\cp\lr{\zeta A_e\frac{\bol{B}}{B^2}}\,d\bol{x}+\circlearrowright
.\label{JI2}
\end{split}
\end{equation}
Since the value of the parameters $\sigma$ and $\lambda$ is not specified, 
terms proportional to different powers of $\sigma$ must cancel separately. The Jacobi identity above contains terms scaling as $\sigma\mc{D}^{-1}$, terms scaling as $\sigma^2\mc{D}^{-1}$, and terms scaling as $\sigma^3\mc{D}^{-1}$.   
From the first group of terms, we obtain the condition 
\begin{equation}
\begin{split}
\int&\mc{D}^{-1}\nabla\cdot\lrc{\frac{\bol{B}}{B^2}\cp\lrs{\nabla\cp\lr{
\theta A_e\frac{\bol{B}}{B^2}}
}}\nabla F_{\eta}\cdot\nabla\cp\lr{A_e\frac{\bol{B}}{B^2}}\,d\bol{x}+\circlearrowright
\\
&=\int\lrc{\mc{D}^{-1}\nabla\cdot\lrc{\frac{\bol{B}}{B^2}\cp\lrs{\nabla\cp\lr{
\theta A_e\frac{\bol{B}}{B^2}}
}}\nabla F_{\eta}+\circlearrowright}\cdot\nabla\cp\lr{A_e\frac{\bol{B}}{B^2}}\,d\bol{x}=0.\label{JI3}
\end{split}
\end{equation}
We therefore see that a sufficient condition for this quantity to vanish is that the magnetic field $\bol{B}$ and the spatial density $A_e$ satisfy
\begin{equation}
\nabla\cp\lr{A_e\frac{\bol{B}}{B^2}}=\bol{0}.\label{JIC}
\end{equation}
Now observe that when \eqref{JIC} holds, 
the surviving terms in the Jacobi identity \eqref{JI2} are
\begin{equation}
\begin{split}
\lrc{F,\lrc{G,H}}+\circlearrowright
=&\int\mc{D}^{-1}\nabla\cdot\lr{A_e\frac{\nabla_{\perp}\theta}{B^2}}\nabla F_{\eta}\cdot\nabla\zeta\cp{ A_e\frac{\bol{B}}{B^2}}\,d\bol{x}+\circlearrowright\\
=&\frac{1}{\sigma}\int \lrs{-{\theta}+{\lambda}\mc{D}^{-1}\lr{A_e\theta}}\nabla F_{\eta}\cdot\nabla\zeta\cp A_e\frac{\bol{B}}{B^2}\,d\bol{x}+\circlearrowright
.\label{JI4}
\end{split}
\end{equation}
On the other hand, the condition \eqref{JIC} implies that
there exists some local function $C$ such that $A_e\bol{B}/B^2=\nabla C$ (Poincar\'e lemma). 
The first term within the integrand involving $-\theta$ can therefore be locally written as
\begin{equation}
\begin{split}
&A_e^{-1}\nabla G_{\eta}\cdot\nabla C\times\nabla H_{\eta}\nabla F_{\eta}\cdot\nabla \zeta\cp\nabla C+\circlearrowright=
A_e^{-1}\\
&\lrs{\frac{\p G_{\eta}}{\p x}\lr{\frac{\p C}{\p y}\frac{\p H_{\eta}}{\p z}-\frac{\p C}{\p z}\frac{\p H_{\eta}}{\p y}}
+\frac{\p G_{\eta}}{\p y}\lr{\frac{\p C}{\p z}\frac{\p H_{\eta}}{\p x}-\frac{\p C}{\p x}\frac{\p H_{\eta}}{\p z}}
+\frac{\p G_{\eta}}{\p z}\lr{\frac{\p C}{\p x}\frac{\p H_{\eta}}{\p y}-\frac{\p C}{\p y}\frac{\p H_{\eta}}{\p x}}}\\
&\lrs{\frac{\p F_{\eta}}{\p x}\lr{\frac{\p {\zeta}}{\p y}\frac{\p C}{\p z}-\frac{\p {\zeta}}{\p z}\frac{\p C}{\p y}}
+\frac{\p F_{\eta}}{\p y}\lr{\frac{\p \zeta}{\p z}\frac{\p C}{\p x}-\frac{\p \zeta}{\p x}\frac{\p C}{\p z}}
+\frac{\p F_{\eta}}{\p z}\lr{\frac{\p \zeta}{\p x}\frac{\p {C}}{\p y}-\frac{\p \zeta}{\p y}\frac{\p {C}}{\p x}}}+\circlearrowright=0.\label{vanish}
\end{split}
\end{equation}
Unfortunately, the term in \eqref{JI4} containing $\mc{D}^{-1}\lr{A_e\theta}$ 
appears to represent an obstruction to the Jacobi identity that cannot be trivially removed.
This fact suggests that in order to fulfill the Jacobi identity when 
condition \eqref{JIC} holds, the bracket \eqref{PB} itself must be modified. 
To this end, define the following alternative bracket:
\begin{equation}
\lrc{F,G}'=\int_{\Omega}\eta\nabla\lr{\frac{\delta F}{\delta\eta}}\cdot\frac{\bol{B}}{B^2}\cp\nabla\lr{\frac{\delta G}{\delta\eta}}\,d\bol{x}.\label{PB2}
\end{equation}
Observe that the bracket \eqref{PB2} satisfies the antisymmetric bracket axioms \eqref{PBA} by the same arguments used for the bracket \eqref{PB}. In addition, if \eqref{JIC} holds, the GHM equation \eqref{GHMgc5} can be written in the equivalent form
\begin{equation}
\frac{\p\eta}{\p t}=\lrc{\eta,H_{\Omega}}'.
\end{equation}
Furthermore, by repeating the same steps as above the Jacobi identity for the new bracket \eqref{PB2} can be evaluated to be
\begin{equation}
\begin{split}
\lrc{F,\lrc{G,H}'}'+\circlearrowright=\int A_e \nabla\lr{\frac{\eta}{A_e}}\cdot\frac{\bol{B}}{B^2}\cp\nabla F_{\eta}\, \nabla G_{\eta}\cdot\frac{\bol{B}}{B^2}\cp\nabla H_{\eta}\,d\bol{x}+\circlearrowright=0, 
\end{split}
\end{equation}
which vanishes by the same calculation used in equation \eqref{vanish}. 
We have thus shown that the antisymmetric bracket \eqref{PB2} 
is a Poisson bracket whenever equation \eqref{JIC} holds. 
It should not be surprising that \eqref{JIC} is exactly the same condition 
for the conservation of generalized entrophy $W_{\Omega}$ (see table 1).
Indeed, the functional $W_{\Omega}$ 
is a Casimir invariant of the Poisson bracket \eqref{PB2},
\begin{equation}
\begin{split}
\frac{dW_{\Omega}}{dt}=&\lrc{W_{\Omega},H_{\Omega}}'\\=&\int{\frac{\eta}{A_e}}\nabla w'\cdot A_e\frac{\bol{B}}{B^2}\cp\nabla \frac{\delta H_{\Omega}}{\delta\eta}\,d\bol{x}\\=&\int_{\p\Omega}
A_e\lrs{\int{\frac{\eta}{A_e}}w''d\lr{\frac{\eta}{A_e}} }
\frac{\bol{B}}{B^2}\cp\nabla\frac{\delta H_{\Omega}}{\delta\eta}\cdot\bol{n}\,dS=0~~~~\forall H_{\Omega}.
\end{split}
\end{equation}
In the last passage, we used the boundary condition $A_e=0$ on $\p\Omega$. 
We stress again that, however, the  bracket \eqref{PB2} cannot be used to generate the GHM system \eqref{GHMgc5} when the condition \eqref{JIC} does not hold. 
We also remark that the mass $M_{\Omega}$ encountered in equation \eqref{MH} is a Casimir invariant of both brackets, i.e.
\begin{equation}
\frac{dM_{\Omega}}{dt}=\lrc{M_{\Omega},H_{\Omega}}=\lrc{M_{\Omega},H_{\Omega}}'=0~~~~\forall H_{\Omega}. 
\end{equation}
In this calculations we used the fact that the boundary condition $A_e=0$ on $\p\Omega$ implies that
\begin{equation}
{\delta M_{\Omega}}=\int_{\Omega}\lrs{\lambda A_e\delta\chi-\nabla\cdot\lr{A_e\frac{\nabla_{\perp}\delta\chi}{B^2}}}\,d\bol{x}=\int_{\Omega}\delta\eta\,d\bol{x}.
\end{equation}

It is worth observing that the condition \eqref{JIC} implies that the magnetic field satisfies 
the Frobenius integrability condition $\bol{B}\cdot\nabla\cp\bol{B}=0$ because $\bol{B}=A_e^{-1}B^2\nabla C$ locally.   
Furthermore, it also implies that the $\bol{E}\cp\bol{B}$ velocity $\bol{v}_{\bol{E}}^{\chi}$ multiplied by the spatial density $A_e$ is divergence free, $\nabla\cdot \lr{A_e\bol{v}_{\bol{E}}^{\chi}}=\nabla\chi\cdot\nabla\cp\lr{A_eB^{-2}\bol{B}}=0$.
Notice also that \eqref{JIC} can always be satisfied for a vacuum field $\bol{B}=\nabla C$ by setting $A_e\propto B^2$. 
Finally, when $A_e$ is a constant the GHM equation \eqref{GHMgc5} defines a noncanonical Hamiltonian system provided that the magnetic field $\bol{B}$ satisfies
\begin{equation}
\nabla\cdot\bol{B}=0,~~~~\nabla\cp\lr{\frac{\bol{B}}{B^2}}=\bol{0}.
\end{equation}
Nontrivial examples of such configurations in different geometries can be found in \cite{GHM}.


\section{Nonlinear stability}
The aim of this section is to elucidate the nonlinear stability properties of steady solutions of the GHM equation \eqref{GHMgc5} with the aid of the energy-Casimir method \cite{Holm,Tronci,Rein,Arnold}.

First, notice that steady solutions $\chi_0\lr{\bol{x}}$ of the GHM equation \eqref{GHMgc5}
can be characterized in terms of critical points of the energy-Casimir functional 
\begin{equation}
\mf{H}_{\Omega}=H_{\Omega}+\gamma M_{\Omega}+\nu W_{\Omega},\label{mfH}
\end{equation}
where $\gamma,\nu$ are spatial constants, and $\nu$ is taken to be zero for configurations violating the Poisson bracket condition \eqref{JIC}. 
Indeed, when $\delta \mf{H}_{\Omega}=0$, 
from \eqref{Ham} one sees that  $\eta_t=0$.
Let $\chi\lr{\bol{x},t}$ denote a solution of the GHM equation \eqref{GHMgc5}. 
A critical point $\chi_0$ is nonlinearly stable provided 
that for every $\epsilon>0$ there exists a norm $\Norm{\cdot}_{1}$ on the space of solutions $\mf{X}$ and a $\delta>0$ such that  $\Norm{\chi\lr{\bol{x},0}-\chi_0\lr{\bol{x}}}_{1}<\delta$ implies  
\begin{equation}
\Norm{\chi\lr{\bol{x},t}-\chi_0\lr{\bol{x}}}_2< \epsilon~~~~\forall t\geq0,\label{NS}
\end{equation}
where $\Norm{\cdot}_2$ is 
a further norm on the state space $\mf{X}$. 
Notice that the nonlinear stability described by \eqref{NS}
only ensures that the solution $\chi$ remains close to the critical point in the norm $\Norm{\cdot}_2$.

\begin{theorem}{(Nonlinear stability of steady solutions of the GHM equation)}
Let $\chi_0\lr{\bol{x}}\in C^{2}\lr{{\Omega}}$ denote a critical point of the energy-Casimir functional  $\mf{H}_{\Omega}$. 
If the condition ${\nabla\cp\lr{{A_e}\bol{B}/B^2}}=\bol{0}$ of equation \eqref{JIC} holds, 
assume that the function $w\lr{\eta/A_e}$ appearing within the integrand of the Casimir invariant $W_{\Omega}$ 
is twice differentiable in its argument, and that it 
satisfies 
\begin{equation}
0 < c_{m}\leq \nu w''=\nu \frac{d^2 w}{d\lr{\eta/A_e}^2}\leq c_M<\infty,\label{w2}
\end{equation}
with $c_m$ and $c_M$ real constants. 
If ${\nabla\cp\lr{{A_e}\bol{B}/B^2}}\neq\bol{0}$ set $\nu=0$. 
Further assume that $\bol{B},A_e\in C^{2}\lr{\bar{\Omega}}$, that  their 
minima satisfy $B_{m},A_{em}>0$, and that the GHM equation \eqref{GHMgc5} admits a solution  $\chi\lr{\bol{x},t}\in C^2\lr{{\Omega}\cp[0,t)}$ for all ${t}\geq 0$ such that $\delta\chi=\chi-\chi_0=0$ and $A_e={0}$ on the boundary  $\p\Omega$. 
Then, the critical point $\chi_0$ is nonlinearly stable:  there exists a positive real constant $\mf{C}$ such that
\begin{equation}
\Norm{\chi\lr{t}-\chi_0}_{\perp}^2\leq \mf{C}\Norm{\chi\lr{0}-\chi_0}_{\perp}^2~~~~\forall t\geq 0,
\end{equation}
with
\begin{equation}
\Norm{\chi}^2_{\perp}=
\begin{cases}
\Norm{\chi}^2_{L^2\lr{\Omega}}+
\Norm{\nabla_{\perp}\chi}^2_{L^2\lr{\Omega}}+
\Norm{\mc{D}\chi}^2_{L^2\lr{\Omega}},~~~~{\rm if}~~\nabla\cp\lr{A_e\frac{\bol{B}}{B^2}}=\bol{0},\\
\Norm{\chi}^2_{L^2\lr{\Omega}}+
\Norm{\nabla_{\perp}\chi}^2_{L^2\lr{\Omega}}~~~~{\rm if}~~\nabla\cp\lr{A_e\frac{\bol{B}}{B^2}}\neq\bol{0}, 
\end{cases}
\end{equation}
where $L^2\lr{\Omega}$ denotes the standard $L^2$ norm in $\Omega$ and we used the abbreviated notation $\chi\lr{t}=\chi\lr{\bol{x},t}$. 
\end{theorem}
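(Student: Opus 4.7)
The plan is to execute the classical energy--Casimir (Arnold) argument. Because $H_{\Omega}$, $M_{\Omega}$, and---whenever the integrability condition $\nabla\cp(A_e\bol{B}/B^2)=\bol{0}$ holds---$W_{\Omega}$ are all conserved along the GHM flow under the stated boundary hypotheses, the composite functional $\mathfrak{H}_{\Omega}$ in \eqref{mfH} is itself an invariant. I would therefore expand $\Delta\mathfrak{H}_{\Omega}[\chi]:=\mathfrak{H}_{\Omega}[\chi]-\mathfrak{H}_{\Omega}[\chi_0]$ about the critical point $\chi_0$, use criticality to eliminate the linear term, bound the remaining second-order difference from below and above by equivalent multiples of $\Norm{\chi-\chi_0}_{\perp}^2$, and invoke conservation $\Delta\mathfrak{H}_{\Omega}(t)=\Delta\mathfrak{H}_{\Omega}(0)$ to propagate initial smallness to all $t\geq 0$. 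Integration by parts, legitimate because $A_e=0$ and $\delta\chi=\chi-\chi_0=0$ on $\p\Omega$, reduces $\delta\mathfrak{H}_{\Omega}=0$ to the pointwise identity $\chi_0+\gamma+\nu w'(\eta_0/A_e)=0$ that characterizes $\chi_0$ and makes the linear term in the expansion vanish identically.

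Since $H_{\Omega}$ is quadratic in $\chi$ and $M_{\Omega}$ is linear, their second-order remainders amount respectively to $\frac{1}{2}\int_{\Omega}A_e(\lambda(\delta\chi)^2+\sigma|\nabla_{\perp}\delta\chi|^2/B^2)\,d\bol{x}$ and zero. For the Casimir term I would apply Taylor's theorem with the mean-value form of the remainder to $w$, obtaining
\[
W_{\Omega}[\chi]-W_{\Omega}[\chi_0]-\delta W_{\Omega}=\frac{1}{2}\int_{\Omega}\frac{w''(\xi)}{A_e}(\delta\eta)^2\,d\bol{x},
\]
where $\xi$ is an intermediate value between $\eta_0/A_e$ and $\eta/A_e$. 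The crucial feature of hypothesis \eqref{w2} is that the two-sided bound $c_m\leq\nu w''\leq c_M$ holds \emph{uniformly} in its argument, so the unknown location of $\xi$ is harmless and the Casimir contribution admits matching pointwise bounds. Summing the three contributions gives
\[
\Delta\mathfrak{H}_{\Omega}=\frac{1}{2}\int_{\Omega}A_e\Bigl(\lambda(\delta\chi)^2+\sigma\frac{|\nabla_{\perp}\delta\chi|^2}{B^2}\Bigr)d\bol{x}+\frac{1}{2}\int_{\Omega}\frac{\nu w''(\xi)}{A_e}(\delta\eta)^2\,d\bol{x}.
\]

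Using the continuity bounds $0<A_{em}\leq A_e\leq A_{eM}$ and $0<B_m\leq B\leq B_M$ together with $\lambda,\sigma>0$ and \eqref{w2}, I would sandwich $\Delta\mathfrak{H}_{\Omega}$ between $c_1\Norm{\delta\chi}_{\perp}^2$ and $c_2\Norm{\delta\chi}_{\perp}^2$ for constants $0<c_1\leq c_2<\infty$: the three integrands reproduce exactly the squared pieces $\Norm{\delta\chi}_{L^2}^2$, $\Norm{\nabla_{\perp}\delta\chi}_{L^2}^2$, and $\Norm{\mathcal{D}\delta\chi}_{L^2}^2=\Norm{\delta\eta}_{L^2}^2$ of the composite norm. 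Combined with conservation,
\[
c_1\Norm{\delta\chi(t)}_{\perp}^2\leq\Delta\mathfrak{H}_{\Omega}[\chi(t)]=\Delta\mathfrak{H}_{\Omega}[\chi(0)]\leq c_2\Norm{\delta\chi(0)}_{\perp}^2,
\]
so the conclusion follows with $\mathfrak{C}=c_2/c_1$. The second branch ($\nabla\cp(A_e\bol{B}/B^2)\neq\bol{0}$, $\nu=0$) is handled identically, simply dropping the enstrophy contribution along with the corresponding $\Norm{\mathcal{D}\delta\chi}_{L^2}^2$ term from the norm. The main technical subtlety I anticipate lies in the lower bound: one must check that the three nonnegative integrands genuinely combine to control the full composite norm $\Norm{\cdot}_{\perp}$ rather than a weaker seminorm, and reconcile the boundary condition $A_e=0$ on $\p\Omega$ (used to kill surface terms) with the uniform positive lower bound $A_{em}>0$ invoked in the sandwich estimate---most naturally by interpreting the latter as an interior or weighted positivity on compact subsets of $\Omega$.
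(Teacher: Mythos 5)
Your proposal is correct and follows essentially the same route as the paper: conservation of $\mf{H}_{\Omega}$, Taylor expansion of $w$ with the mean-value remainder, cancellation of the linear term via the critical-point equation $\mc{D}\lr{\chi_0+\gamma+\nu w'}=0$, and a two-sided sandwich of the resulting quadratic form by the composite norm $\Norm{\cdot}_{\perp}$ using the uniform bounds on $A_e$, $B$, and $\nu w''$. The tension you flag between $A_e=0$ on $\p\Omega$ and the lower bound $A_{em}>0$ is present in the paper's own hypotheses as well, so it is not a defect of your argument relative to theirs.
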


\begin{proof}
We start by observing that key to the proof is the conservation of $\mf{H}_\Omega$.  
Indeed, the energy-Casimir method consists in finding norms $\Norm{\cdot}_1$ 
and $\Norm{\cdot}_2$ on $\mf{X}$ so that the following chain of inequalities holds:
\begin{equation}
\mc{C}\Norm{\chi\lr{0}-\chi_0}_1^2\geq\abs{\mf{H}_{\Omega}\lrs{\chi\lr{0}}-\mf{H}_{\Omega}\lrs{\chi_0}}=\abs{\mf{H}_{\Omega}\lrs{\chi\lr{t}}-\mf{H}_{\Omega}\lrs{\chi_0}}\geq\mc{C}'\Norm{\chi\lr{t}-\chi_0}_2^2,\label{chain}
\end{equation}
where $\mc{C},\mc{C}'$ are positive real constants.  
To derive these inequalities 
for the case $\nu\neq 0$ (corresponding to $\nabla\cp\lr{A_e\bol{B}/B^2}=\bol{0}$)  
we rely on a standard result:
setting $\eta=\mc{D}\chi$,   $\eta_0=\mc{D}\chi_0$, and  $\delta\eta=\eta-\eta_0$, 
Taylor's theorem asserts that  
\begin{equation}
w\lr{\frac{\eta}{A_e}}=w\lr{\frac{\eta_0}{A_e}}+w'\lr{\frac{\eta_0}{A_e}}\frac{\delta\eta}{A_e}+w''\lr{\frac{\tilde{\eta}}{A_e}}\frac{\delta\eta^2}{2A_e^2},
\end{equation}
with $\tilde{\eta}$ between $\eta_0$ and $\eta$ and $w'=dw/d\lr{\eta/A_e}$. 
Since $A_e\in C^2\lr{\bar{\Omega}}$ and $A_e\geq A_{em}>0$, $A_e$ attains a positive maximum $A_{eM}<\infty$ in $\bar{\Omega}$. 
Using $0<c_m\leq\nu w''\leq c_M<\infty$ and $0<A_{em}\leq A_e\leq A_{eM}<\infty$ we thus obtain 
\begin{equation}
\frac{c_m}{2A_{eM}}\Norm{\delta\eta}^2_{L^2\lr{\Omega}}\leq\int_{\Omega}\frac{\nu\delta\eta^2}{2A_e}w''\lr{\frac{\tilde{\eta}}{A_e}}\,d\bol{x}\leq \frac{c_M}{2A_{em}}\Norm{\delta\eta}^2_{L^2\lr{\Omega}}.\label{Taylor}
\end{equation}
Now observe that
\begin{equation}
\begin{split}
\mf{H}_{\Omega}\lrs{\chi\lr{t}}-\mf{H}_{\Omega}\lrs{\chi_0}=&\int_{\Omega}A_e\lrc{\frac{\lambda}{2}\lr{2\chi_0\delta\chi+\delta\chi^2}
+\sigma\frac{2\nabla_{\perp}\delta\chi\cdot\nabla_{\perp}\chi_0+\abs{\nabla_{\perp}\delta\chi}^2}{2B^2}
+\gamma\lambda\delta\chi}\,d\bol{x}\\
&+\nu\int_{\Omega}A_e\lrs{w'\lr{\frac{\eta_0}{A_e}}\frac{\delta\eta}{A_e}+w''\lr{\frac{\tilde{\eta}}{A_e}}\frac{\delta\eta^2}{2A_e^2}}\,d\bol{x}\\
=&
\int_{\Omega}A_e\lrc{\frac{\lambda}{2}\lr{2\chi_0\delta\chi+\delta\chi^2}
-\frac{\sigma}{A_e} \delta\chi\nabla\cdot\lr{A_e\frac{\nabla_{\perp}\chi_0}{B^2}}+\sigma\frac{\abs{\nabla_{\perp}\delta\chi}^2}{2B^2}
+\gamma\lambda\delta\chi}\,d\bol{x}\\
&+\nu\int_{\Omega}A_e\lrs{w'\lr{\frac{\eta_0}{A_e}}\frac{\delta\eta}{A_e}+w''\lr{\frac{\tilde{\eta}}{A_e}}\frac{\delta\eta^2}{2A_e^2}}\,d\bol{x}
.
\end{split}\label{dHt}
\end{equation}
However, by hypothesis $\chi_0$ solves the critical   
equation for $\mf{H}_{\Omega}$
\begin{equation}
\lambda A_e\lr{ \chi_0+\gamma}-\sigma\nabla\cdot\lr{A_e\frac{\nabla_{\perp}\chi_0}{B^2}}
+\nu\mc{D}w'=\mc{D}\lr{\chi_0+\gamma
+\nu w'}=0.\label{eq}
\end{equation}
Hence, the difference \eqref{dHt} reduces to
\begin{equation}
\begin{split}
\mf{H}_{\Omega}\lrs{\chi\lr{t}}-\mf{H}_{\Omega}\lrs{\chi_0}=&
\int_{\Omega}A_e\lrc{\frac{\lambda}{2}\delta\chi^2
+\sigma\frac{\abs{\nabla_{\perp}\delta\chi}^2}{2B^2}
+{\frac{\nu\delta\eta^2}{2A_e^2}w''\lr{\frac{\tilde{\eta}}{A_e}}}}\,d\bol{x}.
\label{hth0}
\end{split}
\end{equation}
Using \eqref{Taylor}, it readily follows that
\begin{equation}
\begin{split}
\frac{1}{2}&\lr{\lambda A_{em}\Norm{\delta\chi}^2_{L^2\lr{\Omega}}+\frac{\sigma A_{em}}{B_{M}^2}\Norm{\nabla_{\perp}\delta\chi}^2_{L^2\lr{\Omega}}+\frac{c_{m}}{A_{eM}}\Norm{\mc{D}\delta\chi}^2_{L^2\lr{\Omega}}}\leq\mf{H}_{\Omega}\lrs{\chi\lr{t}}-\mf{H}_{\Omega}\lrs{\chi_0}\\&=\mf{H}_{\Omega}\lrs{\chi\lr{0}}-\mf{H}_{\Omega}\lrs{\chi_0}\leq \frac{1}{2}\lr{\lambda A_{eM}\Norm{\delta\chi_0}^2_{L^2\lr{\Omega}}+\frac{\sigma A_{eM}}{B_{m}^2}\Norm{\nabla_{\perp}\delta\chi_0}^2_{L^2\lr{\Omega}}+\frac{c_{M}}{A_{em}}\Norm{\mc{D}\delta\chi_0}^2_{L^2\lr{\Omega}}},
\end{split}
\end{equation}
where $\delta\chi_0=\chi\lr{0}-\chi_0$ and $B_M<\infty$ is the maximum of $B$. We have thus shown that 
\begin{equation}
\Norm{\chi\lr{t}-\chi_0}_{\perp}^2\leq \mf{C}\Norm{\chi\lr{0}-\chi_0}_{\perp}^2~~~~\forall t\geq 0,
\end{equation}
for some positive real constant $\mf{C}$. 
The case $\nu=0$ (corresponding to $\nabla\cp\lr{A_e\bol{B}/B^2}\neq\bol{0}$) follows in a similar fashion 
and the theorem is proven. 
\end{proof}

The following remarks are useful.

\begin{remark}
Theorem 1 generalizes Arnold's result concerning the stability of a two dimensional ideal fluid flow \cite{Arnold}. 
Indeed, Arnold's case can be recovered by setting $\bol{B}=\nabla z$, 
$A_e=\sigma=1$, and $\lambda=0$. In this setting we have 
\begin{equation}
\eta=-\Delta_{\lr{x,y}}\chi,
\end{equation}
so that the critical point equation \eqref{eq} reduces to
\begin{equation}
{\chi_0+\nu w'\lr{-\Delta_{\lr{x,y}}\chi_0}}=0.
\end{equation}
Hence, using Arnold's notation, 
\begin{equation}
\nu w''=\frac{\nabla\chi_0}{\nabla\Delta_{\lr{x,y}}\chi_0}=-\frac{\nabla\chi_0}{\nabla\eta_0}.
\end{equation}
\end{remark}

\begin{remark}
According to theorem 1 steady states 
of the GHM equation \eqref{GHMgc5} 
corresponding to $\nu=0$ are nonlinearly stable, provided that the hypothesis of theorem 1  pertaining to regularity and boundary conditions hold true. 
Notice also that $\nu=0$ when the magnetic field $\bol{B}$ and the electron spatial density $A_e$ do not satisfy the condition \eqref{JIC} and the generalized enstrophy $W_{\Omega}$ is not a constant of motion.
\end{remark}

\section{Zonal flows and drift waves in dipole magnetic fields}
As outlined in the introduction, one of the motivations  
behind the development of GHM equation \eqref{GHMgc5} is the understanding of    
drift wave turbulence in complex magnetic geometries, 
such as that of a magnetospheric plasma.
The purpose of this last section is to show that the theory developed 
in this paper points to the existence of stable toroidal zonal flows 
with radial velocity shear within dipole magnetic fields, and to characeterize drift waves in dipole geometry. 
To see this, we first observe that a dipole magnetic field is a vacuum field outside the central region containing the electric current generating it. 
Furthermore, it is axially symmetric. We may therefore write
\begin{equation}
\bol{B}=\nabla\zeta\lr{r,z}=\nabla\Psi\lr{r,z}\cp\nabla\phi,
\end{equation}
where $\lr{r,\phi,z}$ denote cylindrical coordinates, $\zeta\lr{r,z}$ the magnetic potential, and $\Psi\lr{r,z}$ the flux function. 
It is convenient to work with magnetic coordinates $\lr{\zeta,\Psi,\phi}$. The Jacobian determinant of this coordinate change is
\begin{equation}
\nabla\zeta\cdot\nabla\Psi\cp\nabla\phi=B^2.
\end{equation}
Then, it follows that
\begin{subequations}
\begin{align}
\bol{v}_{\bol{E}}^{\chi}=&\chi_{\Psi}\p_{\phi}-\chi_{\phi}\p_{\Psi},\\
A_e\frac{\bol{B}\cdot\nabla\cp\bol{v}_{\bol{E}}^{\chi}}{B^2}=&\frac{A_e}{B^2}\nabla\cdot\nabla_{\perp}\chi,
\end{align}
\end{subequations}
where we used the notation $\lr{\p_{\zeta},\p_{\Psi},\p_{\phi}}$ for tangent vectors. 
Hence, 
the GHM equation \eqref{GHMgc5} can be written as
\begin{equation}
\frac{\p}{\p t}\lrs{\lambda A_e\chi-\sigma\nabla\cdot\lr{A_e\frac{\nabla_{\perp}\chi}{B^2}}}=B^2\lrs{\chi,\frac{A_e}{B^2}\lr{\sigma\frac{\Delta_{\perp}\chi}{B^2}-1}}_{\lr{\Psi,\phi}},\label{GHMdipole}
\end{equation}
where we introduced the linear differential operators $\Delta_{\perp}=\nabla\cdot\nabla_{\perp}$ and $\lrs{f,g}_{\lr{\Psi,\phi}}=f_{\Psi}g_{\phi}-f_{\phi}g_{\Psi}$. 
Notice that this equation is two-dimensional, i.e. it 
can be considered as a closed system within a surface given by a level set of  $\zeta$, with the function $\zeta$ effectively behaving as an external parameter. 
It follows that steady solutions $\chi_0$ of equation \eqref{GHMdipole} 
satisfy
\begin{equation}
\frac{A_e}{B^2}\lr{\sigma\frac{\Delta_{\perp}\chi_0}{B^2}-1}=f\lr{\chi_0,\zeta},\label{steadydipole}
\end{equation}
with $f\lr{\chi_0,\zeta}$ some function of $\chi_0$ and $\zeta$. 
Steady solutions
with given values of mass and generalized enstrophy 
can be equivalently  characterized in terms of critical points of the energy-Casimir function, which, recalling \eqref{eq}, are given by 
\begin{equation}
\mc{D}\lr{\chi_0+\gamma+\nu w'}=0.\label{steadydipole2}
\end{equation}
In this context, a steady zonal flow solution is described by the condition $\chi_{0\phi}=0$, implying a toroidal flow $\bol{v}_{\bol{E}}^{\chi}=\chi_{0\Psi}\p_{\phi}$. 
Evidently, equations \eqref{steadydipole} and \eqref{steadydipole2} admit  such configurations 
provided that $A_{e}$ is axially symmetric (since the dipole magnetic field $\bol{B}$ is axially symmetric). 
Observe that in this case on the equatorial plane $z=0$ the toroidal $\bol{E}\cp\bol{B}$ velocity
has radial shear since
$\bol{v}_{\bol{E}}^{\chi}\lr{r,z=0}=\chi_{0\Psi}\lr{r,z=0}\p_{\phi}$. 
The stability properties of these zonal flow solutions 
can be deduced from theorem 1. In particular, 
they will depend on the specific value of the generalized vorticity $W_{\Omega}$ in the case in which the density $A_e$ satisfies \eqref{JIC}, i.e. $A_e\propto B^2$ (configurations of this type  
are predicted by 
equilibrium statistical mechanics 
because the invariant (Liouville) measure associated with $\bol{E}\cp\bol{B}$ dynamics in a vacuum field is $B^2\,d\bol{x}$ \cite{SY}). Otherwise $\nu=0$, and zonal flows are expected to be nonlinearly stable.
It should be emphasized that the characteristic spatial scale of $\chi_0$ is related to that of  magnetic field $\bol{B}$ and electron spatial density $A_e$, while the zonal nature of the solution stems from the axial symmetry of these fields.
Nevertheless, exception made for the 
case in which $A_e$ satisfies  \eqref{JIC}, the  generalized enstrophy $W_{\Omega}$ is not constant, and inverse energy cascade toward small wave numbers is not available in the usual form.  
The turbulent mechanism by which zonal flow solutions can be formed in a dipole field therefore requires a separate discussion. 
A crucial role should be played by boundary conditions for $\chi_0$, especially when $\nu=0$ and there is no constraint arising from generalized enstrophy, since trivial boundary conditions, such as Dirichlet boundary conditions, result in trivial steady states $\chi_0+\gamma=0$.

We conclude this section by describing the drift wave in a dipole magnetic field. 
Assume that the electron spatial density $A_e=A_e\lr{\zeta,\Psi}$ is axially symmetric. Let 
\begin{equation}
\chi_d=\xi\lr{\zeta,\Psi}\exp\lrc{-{\rm i}\lr{\ell\phi+\omega t}} 
\end{equation}
represent the drift wave with $\ell\in\mathbb{Z}$, $\xi\lr{\zeta,\Psi}$ a real function of $\zeta$ and $\Psi$, and $\omega\in\mathbb{R}$.  
Linearizing equation \eqref{GHMdipole} 
with respect to $\chi_d$ 
we thus obtain the following equation for $\xi$,
\begin{equation}
\frac{1}{A_e}\nabla\cdot\lr{A_e\frac{\nabla_{\perp}\xi}{B^2}}+\xi\lrs{\frac{\ell}{\sigma\omega}\frac{\p}{\p\Psi}\log\lr{\frac{A_e}{B^2}}-\frac{\ell^2}{r^2B^2}-\frac{\lambda}{\sigma}}=0.
\end{equation}
Conversely, the angular frequency $\omega$ can be expressed as
\begin{equation}
\omega=\frac{\ell\frac{\p}{\p\Psi}\log\lr{\frac{A_e}{B^2}}}{\frac{\sigma\ell^2}{r^2B^2}+\lambda-\frac{\sigma}{A_e\xi}\nabla\cdot\lr{A_e\frac{\nabla_{\perp}\xi}{B^2}}}.
\end{equation}
In order to estimate the magnitude of $\omega$, consider the simplified case in which $\ell$ is small and $\log\lr{A_e/B^2}$ is a weak function of $\Psi$, and consider its Taylor expansion around $\Psi_0$. Then, we may set $\xi=\xi\lr{\zeta}$ to find 
\begin{equation}
\omega\approx \frac{\ell\beta_{\Psi_0}}{\lambda}=\ell 
\beta_{\Psi_0}\,T_{e}\lrs{eV},~~~~\beta_{\Psi_0}=\lrs{\frac{\p}{\p\Psi}\log\lr{\frac{A_e}{B^2}}}_{\Psi=\Psi_0}.
\end{equation}
 Here, $T_e\lrs{eV}=\lambda^{-1}$ is the electron temperature expressed in electronvolt. 
 Notice that the term $\log A_e$ is the one responsible for the usual drift wave in the HM equation. 
 Remarkably, even in the presence of a constant electron spatial density $A_e$,  
 an inhomogeneous magnetic field can sustain a geometric drift wave through the spatial dependence of $B$. 
 For a dipole magnetic field $B\sim 1\,T$ in a trap with size $L\sim 1\,m$, a roughly constant electron spatial density $A_e$, $\ell=1$, and an electron temperature of $1\,keV$, one obtains $\omega\approx 10^3\,Hz$. These values are compatible with experimental measurements (see \cite{Kenmochi}).

Finally, we observe that the standard dispersion relation for the drift wave 
in a straight homogeneous magnetic field can be recovered by setting $\bol{B}=B_0\nabla z$, $\xi=\exp\lrc{{\rm i}k_xx}$, $\ell=-k_yL$, $\log A_e=\log A_{e0}+\beta x$, $\zeta=B_0z$, $\Psi=B_0L x$, and $\phi=y/L$ with $B_0,\omega,k_x,k_y,A_{e0},\beta\in\mathbb{R}$ and $L\beta\sim\epsilon<<1$ in equation \eqref{GHMdipole}. In this case, we have
\begin{equation}
\omega=-\frac{k_y\beta}{\lambda B_0+\sigma
\frac{k_x^2+k_y^2}{B_0}}.
\end{equation}

%





\section{Concluding remarks}
The generalized Hasegawa-Mima (GHM) equation \eqref{GHMgc5} is a nonlinear equation describing the evolution of electrostatic turbulence in inhomogeneous plasmas immersed in a static magnetic field with arbitrary geometry.
The GHM equation serves as a generalization of the standard Hasegawa-Mima (HM) equation 
for drift wave turbulence in a straight homogeneous magnetic field, 
and it can be applied to ion-electron plasmas  
characterized by strong inhomogeneities of both the magnetic field $\bol{B}$ and
the electron spatial density $A_e$.  
In particular, the equation can account for  
turbulence occurring over spatial scales comparable to 
the characteristic spatial scales of the background magnetic field,  
and it can be used to model electrostatic turbulence in 
systems with irregular geometries, such as 
the dipole magnetic field of a planetary magnetosphere or the confining magnetic field of a stellarator.

In this study, we examined the conditions under which the GHM equation possesses a noncanonical Hamiltonian structure. We found that the antisymmetric bracket \eqref{PB2} becomes a Poisson bracket whenever the magnetic field $\bol{B}$ 
and the electron spatial density $A_e$ fulfill the integrability condition \eqref{JIC}. 
This same condition is required for the conservation of generalized enstrophy $W_{\Omega}$, which is a Casimir invariant of the Poisson bracket \eqref{PB2}. 
Using the algebraic structure of the GHM equation, we applied 
the energy-Casimir method to obtain a nonlinear stability criterion for steady solutions of the GHM equation \eqref{GHMgc5} (theorem 1).
This result implies that sufficiently regular solutions of the GHM equation, whose initial conditions are sufficiently close to critical points of the energy-Casimir function \eqref{mfH} characterized either by $\nu=0$ or \eqref{w2}, remain close to these critical points at all later times. 
Finally, we showed that 
radially sheared stable toroidal zonal flows may be created in dipole magnetic fields,  
and characterized the angular frequency of magnetospheric drift waves, which explicitly  depends on the magnetic field geometry.

\section*{Statements and declarations}

\subsection*{Data availability}
Data sharing not applicable to this article as no datasets were generated or analysed during the current study.

\subsection*{Funding}
The research of NS was partially supported by JSPS KAKENHI Grant No. 21K13851 
and No.  22H04936.

\subsection*{Competing interests} 
The authors have no competing interests to declare that are relevant to the content of this article.



\end{document}